\documentclass[conference,a4paper]{IEEEtran}
\IEEEoverridecommandlockouts
\usepackage{stfloats,color}
\usepackage{amsfonts}
\usepackage[cmex10]{amsmath}
\usepackage{graphicx}
\usepackage{setspace}
\usepackage{cite}
\usepackage{array}
\usepackage{algorithmic}
\usepackage{mdwmath}
\usepackage{mdwtab}
\usepackage[center]{caption}
\usepackage[font=footnotesize]{subfig}
\usepackage{fixltx2e}
\usepackage{url}
\usepackage{times}
\usepackage{epsfig}
\usepackage{latexsym}
\usepackage{epstopdf}
\usepackage{verbatim}
\usepackage{units}
\usepackage{amsthm}
\usepackage{mdwlist}

\theoremstyle{remark}
\newtheorem{theorem}{Theorem}

\newtheorem{proposition}[theorem]{Proposition}

\graphicspath{{./figures/}}

\begin{document}
\title{Cooperative Access in Cognitive Radio Networks: \\ Stable Throughput and Delay Tradeoffs 
\thanks{This paper was made possible by a NPRP grant 4-1034-2-385 from the
Qatar National Research Fund (a member of The Qatar Foundation). The
statements made herein are solely the responsibility of the authors.}
}

\author{\IEEEauthorblockN{Mahmoud Ashour$^\dag$, Amr A. El-Sherif$~^\P$, Tamer ElBatt$^{*\ddag}$ and Amr Mohamed$^\dag$}   

\IEEEauthorblockA{$^\dag$ Computer Science and Engineering Department, Qatar University, Doha, Qatar.\\ 
$^\P$ Dept. of Electrical Engineering, Alexandria University, Alexandria 21544, Egypt.\\
$^*$ Wireless Intelligent Networks Center (WINC), Nile University, Giza, Egypt.\\
$^\ddag$ EECE Dept., Faculty of Engineering, Cairo University, Giza, Egypt.
}
\{\tt m.ashour@qu.edu.qa, amr.elsherif, telbatt, amrm@ieee.org\}
}

\maketitle

\begin{abstract}
In this paper, we study and analyze fundamental throughput-delay tradeoffs in cooperative multiple access for cognitive radio systems. We focus on the class of randomized cooperative policies, whereby the secondary user (SU) serves either the queue of its own data or the queue of the primary user (PU) relayed data with certain service probabilities. The proposed policy opens room for trading the PU delay for enhanced SU delay. Towards this objective, stability conditions for the queues involved in the system are derived. Furthermore, a moment generating function approach is employed to derive closed-form expressions for the average delay encountered by the packets of both users. Results reveal that cooperation expands the stable throughput region of the system and significantly reduces the delay at both users. Moreover, we quantify the gain obtained in terms of the SU delay under the proposed policy, over conventional relaying that gives strict priority to the relay queue. 

\end{abstract}

\begin{keywords}
Cognitive relaying, moment generating function, stable throughput region, average delay.
\end{keywords}

\IEEEpeerreviewmaketitle

\vspace{-10pt}

\section{Introduction}
\IEEEPARstart{S} \small{pectrum} scarcity coupled with its under-utilization \cite{FCC} stimulated the introduction of cognitive radios \cite{Mitola,Haykin}. The main idea of cognitive radios resides in introducing secondary users (SUs) capable of sensing the spectrum and exploiting periods in which primary users (PUs) are idle. Due to the broadcast nature of wireless channels, cooperative communication in wireless networks has been widely investigated \cite{Tse,Kramer}. Incorporating cooperation into cognitive radio networks, the SUs not only seek idle time slots to transmit their own data, but they may also relay the PUs' packets. Thus, cooperation in cognitive radio networks can be viewed as a win-win situation. The SUs help the PUs deliver their packets to the destination. This helps in fulfilling the demand of the PUs and, hence, increases the availability of slots in which SUs can transmit their own packets.

Cooperative communication can be also viewed as a way of implementing the notion of spatial diversity. Analogous to using multiple antennas to achieve spatial diversity in single communication links \cite{Foschini,Telatar}, the resources of multiple nodes can be exploited to induce a similar effect. Many works addressed cooperative communication from a physical layer perspective, such as \cite{Tse,Kramer,Gamal}. However, we are interested in the implementation of cooperation at higher network layers. For instance, in \cite{simeone2008spectrum}, the PU leases its own bandwidth for a fraction of time to a secondary network in exchange of appropriate gains attributed to cooperation. 
In \cite{Ephremedis}, protocol-level cooperation is implemented among $N$ nodes in a wireless network, whereby each node is a source and a prospective relay at the same time. In \cite{Sadek}, two protocols are developed to implement cooperation in a system of $M$ source terminals, a single destination, and a single cognitive relay. While in \cite{Krikidis}, multiple protocols which allow cooperation between a PU and a set of SUs are analyzed.

Perhaps the closest to our work is \cite{Ephremedis} which presents delay analysis for a cognitive relaying scenario in which full priority is given to the relay queue. In this paper, unlike \cite{Ephremedis}, our prime objective is to develop a mathematical framework for the class of randomized cooperative policies that open room for accommodating cognitive radio systems supporting real-time applications. Towards this objective, we propose and analyze a tunable randomized cooperative policy, whereby the SU serves either the queue of its own data with probability (w.p.) $a$, or the relay queue w.p. $(1-a)$. The proposed policy is shown to enhance the SU delay at the expense of a slight degradation in the PU delay. The significance of the proposed policy lies in its tunability, whereby a variety of objectives could be realized via performing constrained optimizations over the service probability $a$.  
The main contributions of this work are summarized as follows:
\begin{enumerate}

\item We propose a randomized cooperative policy that enables trading the PU delay for enhanced SU delay, depending on the application and system QoS constraints.

\item The stable throughput region of the system is derived. 
Moreover, we derive closed-form expressions for the average delay experienced by the packets of both users. Furthermore, the effect of varying $a$ on the system's throughput and delay is thoroughly investigated. 

\item Extensive simulations are conducted to validate the obtained analytical results.
  
\item We study the fundamental throughput-delay tradeoff at both users. At any given point within the stability region of the system, the optimal value of $a$ that minimizes the average delay for the PU and SU is analytically derived.
\end{enumerate}    

The rest of this paper is organized as follows. Section \ref{system model} presents the system model along with the implemented cooperation strategy. Section \ref{stable throughput region} presents the derivation of the stable throughput region of the system. The average delay characterization of the system is provided in section \ref{delay}. Numerical results are then presented in section \ref{numerical results}. Finally, concluding remarks are drawn in section \ref{conclusion}.

\begin{figure}[t]
\begin{center}
\includegraphics[width=1\columnwidth , height=0.3\columnwidth]{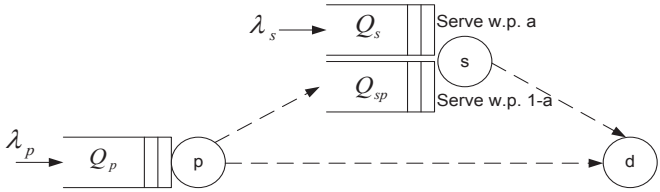}
\caption{Cognitive radio network under consideration.} \label{Fig1}
\end{center}
\vspace{-7mm}
\end{figure}
\section{System Model}\label{system model}
We consider the cognitive radio system shown in Fig. \ref{Fig1}. The system comprises a PU and a SU equipped with infinite capacity buffers, transmitting their packets to a destination $d$. Time is slotted, and the transmission of a packet takes exactly one time slot. Source burstiness is taken into account through modelling the arrivals at the PU and SU as Bernoulli processes with rates $\lambda_{p}$ and $\lambda_{s}$, respectively. The arrival processes at both users are independent of each other, and are independent and identically distributed (i.i.d) across time slots. 

We assume that the SU performs perfect sensing. Thus, the system is contention-free, since at most one user is allowed to transmit in a given slot. Hence, the only reason for packet loss is the channel outage event, which is defined as having a signal-to-noise ratio (SNR) at the receiving node below a certain threshold. Let $f_{pd}$, $f_{sd}$, and $f_{ps}$ denote the probability of no link outage between the PU and destination, the SU and destination, and the PU and SU, respectively.

\subsection{Queueing Model}
There are three queues involved in the system analysis, as shown in Fig. \ref{Fig1}. They are described as follows:
\begin{itemize}
\item $Q_{p}$: stores the packets of the PU corresponding to the external Bernoulli arrival process with rate $\lambda_{p}$.

\item $Q_{sp}$: stores the packets at the SU relayed from the PU.

\item $Q_{s}$: stores the packets of the SU corresponding to the external Bernoulli arrival process with rate $\lambda_{s}$.  
\end{itemize}

The instantaneous evolution of queue lengths is captured as
\begin{align}\label{queue evolution}
Q_{k}^{t+1}= [Q_{k}^{t} - Y_{k}^{t}]^{+} + X_{k}^{t}, ~~k \in \{ p,sp,s \}
\end{align}  
where $[x]^{+}=\text{max}(x,0)$, and $Q_{k}^{t}$ denotes the number of packets in the $k$th queue at the beginning of the $t$th time slot. The binary random variables taking values either $0$ or $1$, $Y_{k}^{t}$ and $X_{k}^{t}$, denote the departures and arrivals corresponding to the $k$th queue in the $t$th time slot, respectively. 

\subsection{Cooperation Strategy}\label{cooperation strategy}
The proposed cooperative scheme is described as follows:
\begin{enumerate}
\item The PU transmits a packet whenever $Q_{p}$ is non empty.

\item If the packet is successfully decoded by the destination, it exits the system.

\item If the packet is not successfully received by the destination, yet, successfully decoded by the SU, the packet is buffered in $Q_{sp}$ and is dropped from $Q_{p}$. It becomes the responsibility of the SU to deliver to the destination.

\item If both the destination and the SU fail to decode the packet, it is kept at $Q_{p}$ for retransmission in the next time slot.
 
\item When the PU is idle, the SU transmits a packet from $Q_{s}$ w.p. $a$, or a packet from $Q_{sp}$ w.p. ($1-a$).

\item If the packet is successfully decoded by the destination, it exits the system. Otherwise, it is kept at its queue for later retransmission. 

\end{enumerate}

It is worth noting from the description of the proposed policy that the system at hand is non work-conserving. A system is considered work-conserving if it does not idle whenever it has packets \cite{Wolff}. However, in our system, one case violates this condition, which arises when the SU detects a slot in which the PU is idle, and it randomly selects to transmit a packet from one of its queues which turns out to be empty, while the other queue is non-empty. Accordingly, the slot would go idle and be wasted despite the system having packets awaiting transmission. Clearly, this results in a degradation in the system performance. Nevertheless, we can extend it to a more flexible work-conserving version of the proposed policy that exploits the resources efficiently without the risk of wasting slots. However, its delay analysis is notoriously complex since it involves deriving the moment generating function (MGF) of the joint lengths of the three queues in the system. Thus, we resort to the non work-conserving policy for its mathematical tractability. Consequently, we derive closed-form expressions for the expected packet delay, formulate and solve, analytically, optimization problems with the objective of minimizing delay at both users.   

\section{Stable Throughput Region}\label{stable throughput region}
In this section, we characterize the stable throughput region of the system. Moreover, we distill valuable insights related to the effect of tuning $a$ on the stability region of the system.

\begin{theorem}\label{Thm1}
The stable throughput region for the system in Fig. \ref{Fig1} under the proposed randomized strategy, for a certain value of $a$, is given by
\begin{align}\label{stable throughput region for a given value of a}
\mathbf{R}=\bigg\{& (\lambda_{p},\lambda_{s}): \lambda_{s}<af_{sd} \left[ 1 - \frac{\lambda_{p}}{\mu_{p}} \right], \notag \\ 
&\text{for}~ \lambda_{p}< \frac{f_{sd}(1-a)[f_{pd} + f_{ps}(1-f_{pd})]}
{f_{sd}(1-a)+f_{ps}(1-f_{pd})}  \bigg\}
\end{align}
\end{theorem}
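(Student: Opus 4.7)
The plan is to characterize stability of each of the three queues in turn via Loynes' theorem, which guarantees that a queue driven by stationary ergodic arrivals and services is stable if and only if its mean arrival rate is strictly smaller than its mean service rate. I would exploit the fact that $Q_p$ evolves autonomously of the two queues maintained at the SU, so that its steady-state behavior can be computed first and then fed into the analysis of $Q_{sp}$ and $Q_s$.

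First I would handle $Q_p$. Because the PU transmits in every slot in which $Q_p$ is nonempty and a packet leaves $Q_p$ whenever the destination decodes it (probability $f_{pd}$) or, failing that, the SU decodes it (additional probability $(1-f_{pd})f_{ps}$), the per-slot service probability equals $\mu_p := f_{pd}+(1-f_{pd})f_{ps}$. Loynes' theorem then yields stability of $Q_p$ whenever $\lambda_p<\mu_p$, and a standard identity for this Bernoulli arrival / Bernoulli service discrete-time queue shows that in this regime the long-run fraction of slots in which $Q_p$ is empty equals $1-\lambda_p/\mu_p$.

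Next I would turn to the two SU queues, using the idle probability just obtained as an input. For the relay queue $Q_{sp}$, a packet arrives precisely when $Q_p$ is nonempty, the PU--destination link is in outage, but the PU--SU link is not, giving mean arrival rate $(\lambda_p/\mu_p)(1-f_{pd})f_{ps}$; a packet departs when $Q_p$ is empty, the SU selects the relay queue (probability $1-a$), and the SU--destination link succeeds, producing mean service rate $(1-\lambda_p/\mu_p)(1-a)f_{sd}$. Imposing the strict Loynes inequality and solving for $\lambda_p$ recovers exactly the upper bound on $\lambda_p$ stated in the theorem. An analogous calculation for $Q_s$, whose arrival rate is $\lambda_s$ and whose mean service rate is $(1-\lambda_p/\mu_p)\,a\,f_{sd}$, yields the displayed bound on $\lambda_s$.

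The main obstacle I anticipate is rigorously justifying the decoupling invoked above: the service processes of both $Q_{sp}$ and $Q_s$ depend on the instantaneous indicator $\mathbf{1}\{Q_p^t=0\}$, so before invoking Loynes on either SU queue one must verify that this indicator, together with the i.i.d.\ channel and scheduling randomness, forms a jointly stationary and ergodic input. The resolution relies on the key structural observation that the dynamics of $Q_p$ are insensitive to $Q_{sp}$ and $Q_s$; once $\lambda_p<\mu_p$, the chain $(Q_p^t)$ reaches a unique stationary distribution, the idle indicator becomes a stationary ergodic process with mean $1-\lambda_p/\mu_p$, and Loynes' theorem then applies cleanly to each SU queue in isolation, at which point combining the three stability conditions gives precisely the region $\mathbf{R}$.
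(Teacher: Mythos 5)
Your proposal is correct and follows essentially the same route as the paper: Loynes' theorem applied to each of $Q_p$, $Q_{sp}$, and $Q_s$, with identical arrival and service rates and the same rearrangement to obtain the bound on $\lambda_p$. Your additional remark on justifying the stationarity of the idle indicator $\mathbf{1}\{Q_p^t=0\}$ before invoking Loynes on the SU queues is a point the paper leaves implicit, but it does not change the argument.
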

\begin{proof}
We use Loynes' theorem \cite{Loynes} to establish the stability of each queue. The theorem states that if the arrival and service processes of a queue are stationary, then the queue is stable if the arrival rate is strictly less than the service rate. 

\begin{itemize}
\item For $Q_{p}$ stability, the following condition must be satisfied
\begin{align}\label{Q1}
\lambda_{p} < \mu_{p}
\end{align}
where $\mu_{p}$ denotes the service rate of $Q_{p}$. A packet departs $Q_{p}$ if it is successfully decoded by at least one node, i.e., the destination or the SU. Thus, $\mu_{p}$ is given by
\begin{align}\label{mu_p}
\mu_{p}=1-(1-f_{pd})(1-f_{ps})=f_{pd} + f_{ps}(1-f_{pd})
\end{align} 
\item For $Q_{\!sp\!}$ stability, the following condition must be satisfied
\begin{align}\label{Q21}
(1-f_{pd})f_{ps}\frac{\lambda_{p}}{\mu_{p}}<\left[ 1-\frac{\lambda_{p}}{\mu_{p}} \right](1-a)f_{sd}
\end{align}
A PU's packet arrives at $Q_{sp}$ if $Q_{p}$ is not empty, which has a probability of $ \lambda_{p} / \mu_{p}$, and an outage occurs in the link between the PU and destination, which happens w.p. ($1-f_{pd}$), yet, no outage occurs in the link between the PU and SU, which happens w.p. $f_{ps}$. This explains the left hand side of (\ref{Q21}) which is the rate of packet arrivals to the SU relay queue. The right hand side represents the service rate seen by the packets of $Q_{sp}$. A packet departs $Q_{sp}$ if $Q_{p}$ is empty, $Q_{sp}$ is selected to transmit a packet, and there is no outage in the link between the SU and destination. 
Re-arranging the terms of (\ref{Q21}) yields the following condition on $\lambda_{p}$
\begin{align}\label{Q21_modified}
\lambda_{p}< \left[ \frac{f_{sd}(1-a)}{f_{sd}(1-a)+f_{ps}(1-f_{pd})} \right]\mu_{p} 
\end{align} 
Comparing (\ref{Q1}) and (\ref{Q21_modified}), it becomes clear that (\ref{Q21_modified}) provides a tighter bound on $\lambda_{p}$ due to the multiplication of $\mu_{p}$ by a term which is less than 1.

\item For $Q_{s}$ stability, the following condition must be satisfied
\begin{align}\label{Q22}
\lambda_{s}<af_{sd} \left[ 1 - \frac{\lambda_{p}}{\mu_{p}} \right]
\end{align}
Using the same rationale, a packet departs $Q_{s}$ if $Q_{p}$ is empty, $Q_{s}$ is selected to transmit a packet, and there is no outage in the link between the SU and destination. This explains the service rate seen by the packets of $Q_{s}$ given in the right hand side of (\ref{Q22}).
\end{itemize}   

Conditions in (\ref{Q21_modified}) and (\ref{Q22}) establish the result in (\ref{stable throughput region for a given value of a}). 
\end{proof} 

\begin{figure}[t]
\begin{center}
\includegraphics[width=1\columnwidth , height=0.68\columnwidth]{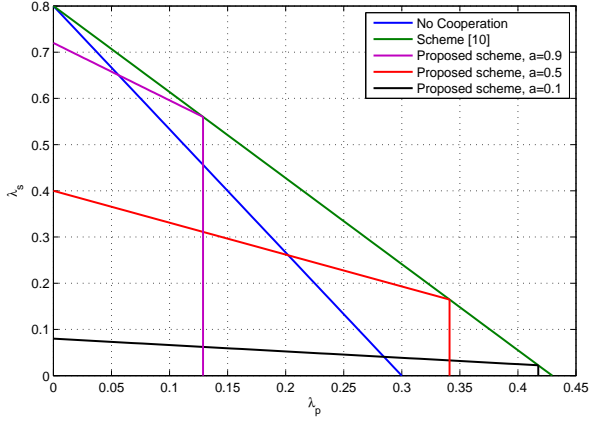}
\caption{Stable throughput region under different policies.} \label{Fig2}
\end{center}
\vspace{-7mm}
\end{figure}

\begin{proposition}\label{conjecture}
The maximum sustainable arrival rate at the PU, $\lambda_{p}$, decreases monotonically with $a$. Conversely, at a fixed $\lambda_{p}$, the maximum sustainable arrival rate at the SU, $\lambda_{s}$, increases monotonically with $a$.
\end{proposition}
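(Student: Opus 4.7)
The plan is to read off the two quantities to be analyzed directly from Theorem~\ref{Thm1} and establish monotonicity by elementary algebra, exploiting the fact that $a$ enters only through the single factor $(1-a)$ in the first bound and as a linear multiplier in the second.

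For the first statement, I would start from the upper bound on $\lambda_p$ given in (\ref{stable throughput region for a given value of a}), namely
\begin{equation*}
\lambda_p^{\max}(a) \;=\; \frac{f_{sd}(1-a)\,\mu_p}{f_{sd}(1-a)+f_{ps}(1-f_{pd})}.
\end{equation*}
Rather than differentiating a rational function, the cleanest move is to divide numerator and denominator by $f_{sd}(1-a)$, rewriting this as
\begin{equation*}
\lambda_p^{\max}(a) \;=\; \frac{\mu_p}{1+\dfrac{f_{ps}(1-f_{pd})}{f_{sd}(1-a)}}.
\end{equation*}
Since $f_{sd}>0$, $f_{ps}>0$ and $f_{pd}<1$, the auxiliary term $f_{ps}(1-f_{pd})/[f_{sd}(1-a)]$ is strictly increasing in $a\in[0,1)$, so the whole expression is strictly decreasing in $a$. (If one prefers, a single derivative computation gives $d\lambda_p^{\max}/da = -\mu_p f_{sd} f_{ps}(1-f_{pd})/v^2 < 0$ with $v=f_{sd}(1-a)+f_{ps}(1-f_{pd})$, yielding the same conclusion.)

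For the second statement, I would fix $\lambda_p$ within the stability region and read the bound on $\lambda_s$ directly from (\ref{stable throughput region for a given value of a}):
\begin{equation*}
\lambda_s^{\max}(a) \;=\; a\,f_{sd}\left[1-\frac{\lambda_p}{\mu_p}\right].
\end{equation*}
Because stability of $Q_p$ forces $\lambda_p<\mu_p$, the bracket is strictly positive, and $f_{sd}>0$. Hence $\lambda_s^{\max}(a)$ is an affine function of $a$ with strictly positive slope, which gives monotone increase.

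The only subtle point, and the one I would flag explicitly, is that in the second part the admissible range of $a$ is itself $a$-dependent: as $a$ grows, the relay-queue constraint derived in~(\ref{Q21_modified}) tightens and can eventually exclude the fixed $\lambda_p$. I would handle this by restricting $a$ to the interval on which $\lambda_p$ still satisfies~(\ref{Q21_modified})---equivalently, $a<1-\tfrac{f_{ps}(1-f_{pd})\lambda_p}{f_{sd}(\mu_p-\lambda_p)}$---and noting that the linear-in-$a$ formula for $\lambda_s^{\max}$ is valid and strictly increasing throughout that interval, which is exactly what the proposition claims. This domain bookkeeping, rather than any calculation, is the only place where care is needed.
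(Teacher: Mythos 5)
Your proposal is correct and follows essentially the same route as the paper: both arguments read the two bounds off Theorem~\ref{Thm1} (equations (\ref{Q21_modified}) and (\ref{Q22})) and verify monotonicity in $a$, the paper by direct differentiation and you by an equivalent algebraic rewriting (plus the same derivative as a fallback). Your closing remark about the $a$-dependent admissible interval for the second claim is a sensible piece of bookkeeping that the paper leaves implicit, but it does not alter the substance of the argument.
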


\begin{proof}
From the system stability conditions, the maximum sustainable $\lambda_{p}$ for a given value of $a$ is given by (\ref{Q21_modified}). Taking the derivative of (\ref{Q21_modified}) with respect to (w.r.t.) $a$ yields 
\begin{equation}\label{lambda_p vs a}
\frac{\partial \lambda_{p}}{\partial a}=\frac{-f_{sd}f_{ps}(1-f_{pd})\mu_{p}}
{[f_{sd}(1-a)+f_{ps}(1-f_{pd})]^{2}}
\end{equation}  
Since $f_{sd}$, $f_{ps}$, $f_{pd}$, and $\mu_{p}$ are all positive numbers less than $1$, we conclude from (\ref{lambda_p vs a}) that $\frac{\partial \lambda_{p}}{\partial a}$ is negative definite. Thus, the maximum sustainable $\lambda_{p}$ monotonically decreases with $a$.

On the other hand, at a fixed $\lambda_{p}$, the maximum sustainable $\lambda_{s}$ for a given value of $a$ is given by (\ref{Q22}). Taking the derivative of (\ref{Q22}) w.r.t. $a$ yields
\begin{equation}\label{lambda_s vs a}
\frac{\partial \lambda_{s}}{\partial a}=f_{sd} \left[ 1- \frac{\lambda_{p}}{\mu_{p}} \right]
\end{equation}
The stability condition in (\ref{Q1}) guarantees that $\frac{\lambda_{p}}{\mu{p}}$ is less than $1$. Thus, $\frac{\partial \lambda_{s}}{\partial a}$ is positive definite. Therefore, at a fixed $\lambda_{p}$, the maximum sustainable $\lambda_{s}$ monotonically increases with $a$.
\end{proof}

In Fig.~\ref{Fig2}, we plot the stable throughput region of the studied system under different multiple-access policies. Hereafter, the system parameters are chosen as follows: $f_{pd}=0.3$, $f_{ps}=0.4$, and $f_{sd}=0.8$. From Fig. \ref{Fig2}, we depict the effect of the probability $a$ on the stability region of the proposed scheme. It can be realized that increasing $a$ decreases the maximum sustainable $\lambda_{p}$. On the contrary, increasing $a$ results in an increase in the maximum sustainable $\lambda_{s}$, for every feasible $\lambda_{p}$. This result is intuitive, since increasing $a$ gives more chance for transmitting the SU own packets as opposed to the PU's relayed packets. This, in turn, reduces the degree of cooperation the PU experiences from the SU and, hence, the maximum sustainable $\lambda_{p}$ decreases. On the other hand, since the SU own packets are more likely to be transmitted, the system can sustain higher values of $\lambda_{s}$.

\begin{figure}[t]
\begin{center}
\includegraphics[width=1\columnwidth , height=0.68\columnwidth]{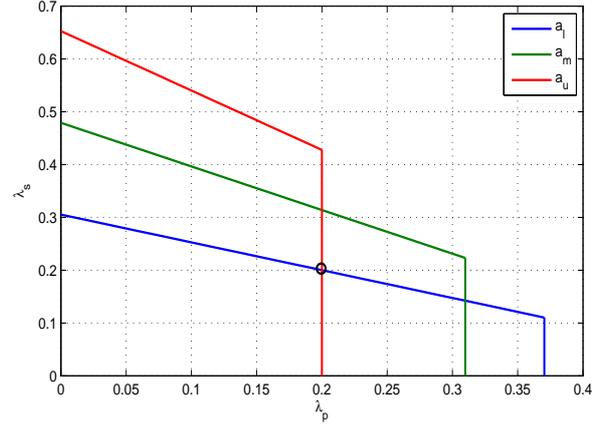}
\caption{Stable throughput region at different values of $a$ for the system to be stable at $\lambda_{p}=\lambda_{s}=0.2$.} \label{Fig3}
\end{center}
\vspace{-7mm}
\end{figure}
 
\begin{proposition}\label{proposition 1}
For any pair of arrival rates ($\lambda_{p},\lambda_{s}$) in the stable throughput region of the system, there is a bounded range of values that $a$ can take for the system to be stable at the given ($\lambda_{p},\lambda_{s}$), which is given by
\begin{equation}
\frac{\lambda_{s} \mu_{p}}{f_{sd}(\mu_{p}-\lambda_{p})} < a < 1-\frac{f_{ps}(1-f_{pd})\lambda_{p}}{f_{sd}(\mu_{p}-\lambda_{p})}
\end{equation} 
\end{proposition}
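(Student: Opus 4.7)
The plan is to invert the two stability bounds obtained in Theorem \ref{Thm1}, viewing them not as constraints on $(\lambda_p,\lambda_s)$ for a fixed $a$, but as constraints on $a$ for a fixed pair $(\lambda_p,\lambda_s)$ lying in the stability region. Two of the three inequalities used in the proof of Theorem \ref{Thm1} actually depend on $a$, namely the $Q_s$ stability condition (\ref{Q22}) and the $Q_{sp}$ stability condition (\ref{Q21_modified}); the remaining condition $\lambda_p<\mu_p$ depends only on the channel and arrival parameters and is automatically satisfied because $(\lambda_p,\lambda_s)\in\mathbf{R}$. So it suffices to rearrange (\ref{Q22}) and (\ref{Q21_modified}) to isolate $a$, and then combine them.

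First I would handle the $Q_s$ bound. Since $\mu_p-\lambda_p>0$ (by $Q_p$ stability) and $f_{sd}>0$, dividing (\ref{Q22}) through by $f_{sd}(1-\lambda_p/\mu_p)$ is a legitimate operation and preserves the inequality direction. The result immediately gives the lower endpoint
\begin{equation*}
a>\frac{\lambda_s\mu_p}{f_{sd}(\mu_p-\lambda_p)},
\end{equation*}
which matches the left side of the claimed interval.

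Next I would tackle the $Q_{sp}$ bound, which is where essentially all the work is. Starting from (\ref{Q21_modified}), I would cross-multiply by the (positive) denominator $f_{sd}(1-a)+f_{ps}(1-f_{pd})$, expand, and collect the terms that do not depend on $a$ on one side. This yields $\lambda_p f_{ps}(1-f_{pd})<f_{sd}(1-a)(\mu_p-\lambda_p)$. Dividing by the positive quantity $f_{sd}(\mu_p-\lambda_p)$ and solving for $a$ produces the upper endpoint
\begin{equation*}
a<1-\frac{f_{ps}(1-f_{pd})\lambda_p}{f_{sd}(\mu_p-\lambda_p)}.
\end{equation*}
Combining the two bounds gives the stated interval.

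The only step requiring any care is the $Q_{sp}$ manipulation, because the inequality (\ref{Q21_modified}) is not linear in $a$ in its original displayed form; one must confirm that the denominator $f_{sd}(1-a)+f_{ps}(1-f_{pd})$ is strictly positive before cross-multiplying. This follows from the fact that the lower bound already forces $a<1$ (indeed, $\lambda_s<f_{sd}(1-\lambda_p/\mu_p)$ holds for $(\lambda_p,\lambda_s)\in\mathbf{R}$, which is exactly what makes the lower bound in the proposition less than $1$), so $1-a>0$ and the denominator is positive. Finally, to argue the range is nonempty, I would observe that this is precisely the content of the hypothesis $(\lambda_p,\lambda_s)\in\mathbf{R}$: rewriting the defining inequalities of $\mathbf{R}$ shows that the lower endpoint is strictly less than the upper endpoint exactly when $(\lambda_p,\lambda_s)$ lies in the stability region, closing the argument.
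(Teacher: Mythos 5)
Your proposal is correct and follows essentially the same route as the paper: the upper bound on $a$ is obtained by rearranging the $Q_{sp}$ stability condition (\ref{Q21_modified}) into $f_{sd}(1-a)(\mu_{p}-\lambda_{p})>f_{ps}(1-f_{pd})\lambda_{p}$ and dividing by the positive quantity $f_{sd}(\mu_{p}-\lambda_{p})$, while the lower bound follows directly from the $Q_{s}$ condition (\ref{Q22}). Your added remarks on the positivity of the denominator and the nonemptiness of the interval are sound but not needed beyond what the paper already uses (note the denominator $f_{sd}(1-a)+f_{ps}(1-f_{pd})$ is positive for any $a\in(0,1)$ simply because $f_{ps}(1-f_{pd})>0$).
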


\begin{proof}
This follows directly from the equations of the stability region. Using (\ref{Q21_modified}), it can be easily shown that
\begin{equation}
f_{sd}(1-a)(\mu_{p}-\lambda_{p}) > f_{ps}(1-f_{pd})\lambda_{p}
\end{equation}
Since the system is assumed stable at ($\lambda_{p},\lambda_{s}$), then the term $(\mu_{p}-\lambda_{p})$ is positive definite. Thus, 
\begin{equation}\label{upper bound}
a < 1-\frac{f_{ps}(1-f_{pd})\lambda_{p}}{f_{sd}(\mu_{p}-\lambda_{p})}
\end{equation}
Using (\ref{Q22}), it is straightforward to show that
\begin{equation}\label{lower bound}
a > \frac{\lambda_{s} \mu_{p}}{f_{sd}(\mu_{p}-\lambda_{p})}
\end{equation}
Thus, (\ref{upper bound}) and (\ref{lower bound}) establish the result in Proposition \ref{proposition 1}.
\end{proof}

To illustrate Propositions \ref{conjecture} and \ref{proposition 1}, we compute the range of $a$ values that makes the system stable at $\lambda_{p}=\lambda_{s}=0.2$. Using (\ref{upper bound}) and (\ref{lower bound}), the upper and lower bounds on $a$, $a_{u}$ and $a_{l}$, are computed, respectively. In Fig.~\ref{Fig3}, we plot the stability region at different values of $a$, specifically at $a_{u}$, $a_{l}$, and $a_{m}=\frac{a_{u}+a_{l}}{2}$. The curves corresponding to $a_{u}$ and $a_{l}$ intersect at the point of interest, $\lambda_{p}=\lambda_{s}=0.2$. From Proposition \ref{conjecture}, we know that increasing $a$ reduces the maximum sustainable $\lambda_{p}$. This explains why at $a_{u}$ the stability region has a maximum $\lambda_{p}=0.2$. Exceeding the value of $a_{u}$ results in excluding any point with $\lambda_{p}=0.2$ from the stability region. Similarly, for the curve corresponding to $a_{l}$, it is clear that at $\lambda_{p}=0.2$, the maximum sustainable $\lambda_{s}$ is $0.2$. At a given $\lambda_{p}$, decreasing $a$ reduces the maximum sustainable $\lambda_{s}$, thus, $a_{l}$ is the minimum possible value of $a$ for which the system remains stable at $\lambda_{p}=\lambda_{s}=0.2$. Moreover, the point of interest is strictly inside the stability region for any value of $a$ in the open interval ($a_{l},a_{u}$). This is illustrated by the curve corresponding to $a_{m}$.   

\begin{theorem}
The union of the stability regions given by (\ref{stable throughput region for a given value of a}) over all possible values of $a$ is the same as that of any work-conserving cooperative scheme, e.g., the one derived in \cite{Ephremedis}, and is given by 
\begin{equation}
\lambda_{s} < f_{sd}- \left[ \frac{f_{sd}+f_{ps}(1-f_{pd})}{f_{pd}+f_{ps}(1-f_{pd})} \right] \lambda_{p}
\end{equation}
\end{theorem}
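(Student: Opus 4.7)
The plan is to compute, for each fixed $\lambda_p$, the supremum over $a \in [0,1]$ of the sustainable SU arrival rate $\lambda_s$, and show that this supremum traces out the line given in the theorem statement, thereby identifying the union as a half-plane.

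First, I would observe that in Theorem \ref{Thm1} the SU-queue stability bound $\lambda_s < a f_{sd}(1 - \lambda_p/\mu_p)$ is strictly increasing in $a$ whenever $\lambda_p < \mu_p$, so maximizing $\lambda_s$ at a fixed $\lambda_p$ amounts to pushing $a$ as close as possible to the upper bound that the relay-queue constraint permits. That second constraint, rearranged exactly as in the proof of Proposition \ref{proposition 1}, reads
\[
a < a^{\star}(\lambda_p) := 1 - \frac{f_{ps}(1-f_{pd})\lambda_p}{f_{sd}(\mu_p - \lambda_p)},
\]
so $a^{\star}(\lambda_p)$ is the natural supremum taken by $a$ in the union, for any $\lambda_p < \mu_p$.

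Second, I would substitute $a \uparrow a^{\star}(\lambda_p)$ into the SU bound, obtaining
\[
\sup \lambda_s \;=\; a^{\star}(\lambda_p)\, f_{sd}\,\frac{\mu_p - \lambda_p}{\mu_p}.
\]
Expanding $a^{\star}$ and using $\mu_p = f_{pd} + f_{ps}(1-f_{pd})$ collapses this in a few lines of algebra to the expression $f_{sd} - [f_{sd}+f_{ps}(1-f_{pd})]/[f_{pd}+f_{ps}(1-f_{pd})]\,\lambda_p$, matching the claim. The feasible range of $\lambda_p$ is automatically $\lambda_p < f_{sd}\mu_p/[f_{sd}+f_{ps}(1-f_{pd})]$, which is both the $\lambda_p$-intercept of the claimed line and the condition that keeps $a^{\star}(\lambda_p)$ positive.

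Third, I would verify set equality in both directions. For the forward direction, any $(\lambda_p,\lambda_s)$ in some $\mathbf{R}(a)$ satisfies $\lambda_s < a f_{sd}(1-\lambda_p/\mu_p) < a^{\star}(\lambda_p) f_{sd}(1-\lambda_p/\mu_p)$, which simplifies to the stated inequality. For the reverse direction, any $(\lambda_p,\lambda_s)$ strictly below the envelope lies in $\mathbf{R}(a)$ for an $a$ chosen slightly below $a^{\star}(\lambda_p)$, by continuity of both constraints in $a$. Agreement with the work-conserving region of \cite{Ephremedis} is then a direct algebraic comparison. The only real subtlety — and thus the main obstacle — is the careful handling of the strict inequalities: a naive maximization yields a closed line, whereas the union of the open regions $\mathbf{R}(a)$ is itself open, so one must argue via continuity that every point strictly under the envelope is captured by some admissible $a$ without claiming the boundary itself.
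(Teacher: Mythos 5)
Your proposal is correct and follows essentially the same route as the paper: at each fixed $\lambda_p$ you push $a$ to the upper limit imposed by the relay-queue constraint and substitute into the SU bound, which is algebraically identical to the paper's step of equating the upper and lower bounds on $a$ from Proposition \ref{proposition 1}. Your explicit two-directional set-equality argument and attention to the openness of the union are a welcome tightening of the paper's more informal ``there is one and only one value of $a$'' reasoning, but they do not change the underlying approach.
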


\begin{proof}
We take the union of (\ref{stable throughput region for a given value of a}) over all possible values of $a$, i.e., $a \in (0,1)$. A method used to characterize this union is proposed in \cite{Sadek} in an analogous problem. It boils down to solving a constrained optimization problem to find the maximum feasible $\lambda_{s}$ corresponding to each feasible $\lambda_{p}$. 

Herein, we use the fact that increasing $a$ increases the maximum sustainable $\lambda_{s}$ for a given $\lambda_{p}$. Thus, at a fixed $\lambda_{p}$, the maximum over all feasible $\lambda_{s}$ is achieved at the highest value of $a$ that keeps the system stable at this given $\lambda_{p}$. Moreover, the point corresponding to the maximum over all possible $\lambda_{s}$ and the given $\lambda_{p}$ defines the boundary of the stability region, and hence intuitively, there is one and only one value of $a$ that keeps the system stable at this point. Thus, the upper and lower bounds on $a$ that keep the system stable at this point coincide.
Equating (\ref{upper bound}) and (\ref{lower bound}) yields
\begin{equation}
\lambda_{s} \mu_{p}=f_{sd} \mu_{p} - (f_{sd}+f_{ps}(1-f_{pd}))\lambda_{p}
\end{equation}
Dividing both sides by $\mu_{p}$ and using (\ref{mu_p}), the maximum over all feasible $\lambda_{s}$ at a fixed $\lambda_{p}$ is given by
\begin{equation}
\lambda_{s} = f_{sd}- \left[ \frac{f_{sd}+f_{ps}(1-f_{pd})}{f_{pd}+f_{ps}(1-f_{pd})} \right] \lambda_{p}
\end{equation}      
which represents the boundary of the stability region of any work-conserving cooperative policy that is derived in \cite{Ephremedis} (Section III-A). Therefore, it has been established that, for any point ($\lambda_{p},\lambda_{s}$) that belongs to the stable throughput region of any work-conserving policy, there exists at least one value of $a$ that guarantees the system stability at this point under the non work-conserving proposed policy. Thus, the union of all stable throughput regions over all values of $a$ of the proposed policy spans the stability region derived in \cite{Ephremedis}. 
\end{proof} 
From Fig.~\ref{Fig2}, the union of all stable throughput regions of the proposed system (which matches that of \cite{Ephremedis}) strictly contains the stable throughput region achieved without cooperation.
\section{Average Delay Characterization}\label{delay}
In this section, we derive closed-form expressions for the average delay encountered by the packets of both users. 

\begin{theorem}\label{Thm2}
The average delay encountered by the packets of the PU and SU, $D_{p}$ and $D_{s}$, respectively, under the proposed randomized scheme, are given by 
\begin{equation}\label{Dp}
D_{p}= \frac{N_{p} + N_{sp}}{\lambda_{p}}
\end{equation}
\begin{equation}\label{Ds}
D_{s}= \frac{N_{s}}{\lambda_{s}} \quad 
\end{equation}
where $N_{p}$ and $N_{sp}$, the average lengths of $Q_{p}$ and $Q_{sp}$, respectively, are given by
\begin{equation}\label{pk}
N_{p}=\frac{-\lambda_{p}^{2} + \lambda_{p}}{f_{pd}+f_{ps}(1-f_{pd})-\lambda_{p}}
\end{equation} 

\begin{equation}\label{Nsp}
N_{sp}= \frac{m \lambda_{p}^{2} + n \lambda_{p}}{\alpha \lambda_{p}^{2} + \beta \lambda_{p} + \gamma}
\end{equation} 
where
\begin{align}
\begin{split}
& m = f_{ps}(1-f_{pd}) \bigg[ \frac{(1-a)f_{sd}-f_{pd}}{f_{pd}+f_{ps}(1-f_{pd})} \\
&\phantom{m = f_{ps}(1-f_{pd})\bigg[}- (1-a)f_{sd} - f_{ps}(1-f_{pd}) \bigg ] 
\end{split} \notag \\
\begin{split}
n = f_{ps}(1-f_{pd})\left[ f_{pd}+f_{ps}(1-f_{pd}) \right ]
\end{split}\notag \\
\begin{split}
\alpha = (1-a)f_{sd} + f_{ps}(1-f_{pd})
\end{split}\notag \\
\begin{split}
\beta = \left[ f_{pd}+f_{ps}(1-f_{pd}) \right] \left[ -2(1-a)f_{sd}-f_{ps}(1-f_{pd}) \right]
\end{split}\notag \\
\begin{split}
\gamma = (1-a)f_{sd} \left[ f_{pd}+f_{ps}(1-f_{pd}) \right]^{2}
\end{split}
\end{align}
and $N_{s}$, the average length of $Q_{s}$, is given by
\begin{equation}\label{Ns}
N_{s}=\frac{\lambda_{p}\lambda_{s}A + (\lambda_{s}^{2}-\lambda_{s})B(B+\lambda_{p})}{BC}
\end{equation}
where
\begin{align}
\begin{split}
A=af_{sd}[f_{pd}+f_{ps}(1-f_{pd})-1]
\end{split} \notag \\ 
\begin{split}
B=f_{pd}+f_{ps}(1-f_{pd})-\lambda_{p}
\end{split} \notag \\
\begin{split}
C=(\lambda_{s}-af_{sd})[f_{pd}+f_{ps}(1-f_{pd})] + af_{sd}\lambda_{p}
\end{split}
\end{align}
\end{theorem}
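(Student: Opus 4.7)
The plan is to reduce the delay computations to average queue lengths via Little's theorem and then obtain those lengths from the recursion (\ref{queue evolution}) through a probability generating function (PGF) analysis.

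First I would justify (\ref{Dp}) and (\ref{Ds}) by Little's law. A PU packet enters at $Q_{p}$ at rate $\lambda_{p}$ and spends $T_{1}$ slots there before either leaving the system directly or being relayed to $Q_{sp}$, where it stays a further $T_{2}$ slots (with $T_{2}=0$ for packets that exit from $Q_{p}$). Applying Little's theorem to the aggregate population of PU packets residing in the combined $(Q_{p},Q_{sp})$ subsystem, whose input throughput is exactly $\lambda_{p}$, gives $D_{p}=E[T_{1}+T_{2}]=(N_{p}+N_{sp})/\lambda_{p}$. For the SU, $Q_{s}$ is fed only by the external Bernoulli stream of rate $\lambda_{s}$, so Little's law directly yields $D_{s}=N_{s}/\lambda_{s}$.

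Next I compute $N_{p}$. Since $Q_{p}$ sees an i.i.d.\ Bernoulli arrival of rate $\lambda_{p}$ and, whenever non-empty, an i.i.d.\ Bernoulli service attempt of rate $\mu_{p}$, its marginal PGF $G_{p}(z)=E[z^{Q_{p}}]$ satisfies the standard discrete-time Geo/Geo/1 functional equation. Solving it with the normalization $G_{p}(1)=1$ and the balance relation $P(Q_{p}=0)=1-\lambda_{p}/\mu_{p}$, and then differentiating at $z=1$, reproduces (\ref{pk}).

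The bulk of the work lies in $N_{sp}$ and $N_{s}$, because the service of these queues is gated on $\{Q_{p}=0\}$. For $N_{sp}$ I would set up the joint PGF $F(x,y)=E[x^{Q_{p}}y^{Q_{sp}}]$ in steady state, substitute the coupled recursions from (\ref{queue evolution}), and split the expectation over $\{Q_{p}=0\}$ and $\{Q_{p}>0\}$. Each branch produces a distinct pairing of the arrival indicator for $Q_{sp}$ (active only if $Q_{p}>0$, correlated with the service of $Q_{p}$ through the joint event ``destination fails, SU decodes'', of probability $(1-f_{pd})f_{ps}$) and the service indicator for $Q_{sp}$ (active only if $Q_{p}=0$, with probability $(1-a)f_{sd}$). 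This yields a functional equation of the form $F(x,y)\,K(x,y)=H_{1}(x,y)\,F(0,y)+H_{2}(x,y)\,G_{p}(x)$ with an unknown boundary PGF $F(0,y)$. The classical resolution is to choose a curve $x=\phi(y)$ along which the kernel $K$ vanishes, which forces a closed-form expression for $F(0,y)$. Computing $\partial_{y}F(x,y)\big|_{(1,1)}$, after cancelling the compensating $(1-x)$ and $(1-y)$ factors by L'H\^{o}pital, delivers $N_{sp}$ and, after algebraic simplification, matches (\ref{Nsp}). An analogous derivation based on $\tilde F(x,y)=E[x^{Q_{p}}y^{Q_{s}}]$ handles $N_{s}$; the only change is that $Q_{s}$ receives independent external Bernoulli arrivals of rate $\lambda_{s}$, and its service indicator activates with probability $af_{sd}$ when $Q_{p}=0$, giving (\ref{Ns}).

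The hardest step is the joint PGF manipulation for $N_{sp}$: one must correctly encode the correlation between the departure of a packet from $Q_{p}$ and its arrival at $Q_{sp}$, identify the kernel curve that eliminates the unknown boundary PGF, and carry out the limit at $(1,1)$ where multiple $(1-x)$ and $(1-y)$ factors must cancel before the mean emerges. Once that machinery is in place, $N_{s}$ follows by an essentially parallel but simpler calculation, and the delay expressions drop out of the Little's law reductions above.
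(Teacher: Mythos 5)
Your proposal is essentially sound and shares the paper's overall architecture: Little's law to reduce delays to mean queue lengths, the discrete-time Geo/Geo/1 (Pollaczek--Khinchine) result for $N_{p}$, and joint generating functions $\mathbf{E}[x^{Q_{p}}y^{Q_{s}}]$ and $\mathbf{E}[x^{Q_{p}}y^{Q_{sp}}]$ for the gated queues. Two points differ. First, for $D_{p}$ you apply Little's law once to the combined $(Q_{p},Q_{sp})$ population with input rate $\lambda_{p}$; the paper instead conditions on whether a packet is relayed (probability $\epsilon=1-f_{pd}/\mu_{p}$), writes $D_{p}=\tau_{p}+\epsilon\tau_{sp}$, and applies Little's law separately to each queue with arrival rates $\lambda_{p}$ and $\epsilon\lambda_{p}$ --- the $\epsilon$'s cancel and both give (\ref{Dp}); your version is arguably cleaner. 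Second, and more substantively, you resolve the unknown boundary function via the classical kernel method (finding the curve $x=\phi(y)$ annihilating the kernel to pin down $F(0,y)$ in closed form). The paper never determines $G(0,y)$: it gets $G(0,0)$ from normalization, obtains one linear relation between $G_{y}(1,1)$ and $G_{y}(0,1)$ by differentiating the functional equation and applying L'H\^{o}pital at $(1,1)$, and a second such relation by evaluating $\partial_{y}G(y,y)|_{y=1}=N_{p}+N_{s}$ along the diagonal with $N_{p}$ already known; eliminating $G_{y}(0,1)$ between the two yields $N_{s}$. The paper's route is more elementary (two linear equations, no root-finding in the kernel), while yours is more systematic and would also yield higher moments. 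One bookkeeping caveat: the functional equation that actually falls out of (\ref{queue evolution}) has the unknown boundary terms $F(0,y)$ and $F(0,0)$ on the right-hand side, not the marginal $G_{p}(x)$ as you wrote; that does not derail the kernel method, but you should set the equation up in that form before vanishing the kernel.
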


\begin{proof}
We start by computing the average delay of the packets of the SU followed by that of the PU. 

Applying Little's law on $Q_{s}$ renders $D_{s}$ exactly as given by (\ref{Ds}). Thus, it remains to calculate $N_{s}$. The dependence of the service processes at $Q_{s}$ and $Q_{sp}$ on the state of $Q_{p}$ is inherent from the concept of cognitive radios.  
It is worth noting that the non work-conserving behavior of the proposed policy makes the delay analysis mathematically tractable, since $Q_{s}$ and $Q_{sp}$ become independent. To analyze the average delays at different queues, we resort to the MGF approach \cite{33}. The MGF of the joint queue lengths $Q_{p}$ and $Q_{s}$ is defined as 
\begin{align}\label{Gxy}
G(x,y)= \lim_{t \rightarrow \infty} \mathbf{E} \left[ x^{Q_{p}^{t}} y^{Q_{s}^{t}} \right]
\end{align} 
where $\mathbf{E}$ and $\mathbf{P}$ denote the statistical expectation and the probability operators, respectively. Expanding (\ref{Gxy}), taking its derivative w.r.t. $y$ and substituting by $x=y=1$ yields
\begin{align}
G_{y}(1,1)=\lim_{t \rightarrow \infty} \displaystyle \sum_{j=0}^{\infty} j \mathbf{P}
\left[Q_{s}^{t}=j \right] = N_{s}
\end{align}
Using the queue evolution form provided by (\ref{queue evolution}), we write
\begin{align}
&\mathbf{E} \left[ x^{Q_{p}^{t+1}} y^{Q_{s}^{t+1}} \right] =
\mathbf{E} \left[ x^{(Q_{p}^{t}-Y_{p}^{t}+X_{p}^{t})} y^{(Q_{s}^{t}-Y_{s}^{t}+X_{s}^{t})} \right] 
\notag \\
&= (\lambda_{p}x+1-\lambda_{p})(\lambda_{s}y+1-\lambda_{s}) 
\mathbf{E} \left[ x^{(Q_{p}^{t}-Y_{p}^{t})} y^{(Q_{s}^{t}-Y_{s}^{t})} \right]
\end{align}
This follows from the independent arrivals at $Q_{p}$ and $Q_{s}$, that yield independent Bernoulli distributed random variables, $X_{p}^{t}$ and $X_{s}^{t}$, which produce MGFs of $(\lambda_{p}x+1-\lambda_{p})$ and $(\lambda_{s}y+1-\lambda_{s})$, respectively. Expanding the above equation, we have
\begin{align}\label{expansion}
&\mathbf{E} \left[ x^{Q_{p}^{t+1}} y^{Q_{s}^{t+1}} \right]= \qquad \qquad \qquad \qquad \qquad \qquad \qquad \notag \\ &(\lambda_{p}x+1-\lambda_{p})(\lambda_{s}y+1-\lambda_{s})
\bigg\{ \mathbf{E}[\mathbf{1}[Q_{p}^{t}=0,Q_{s}^{t}=0]] \notag \\
& +\left[ \frac{f_{pd}+f_{ps}(1-f_{pd})}{x} + (1-f_{ps})(1-f_{pd}) \right]\notag \\
& \times \mathbf{E}[x^{Q_{p}^{t}}.\mathbf{1}[Q_{p}^{t}>0,Q_{s}^{t}=0]] \notag \\
& +\left[ \frac{af_{sd}}{y} + 1 -af_{sd} \right]
\mathbf{E}[y^{Q_{s}^{t}}.\mathbf{1}[Q_{p}^{t}=0,Q_{s}^{t}>0]]\notag \\
& +\left[ \frac{f_{pd}+f_{ps}(1-f_{pd})}{x} + (1-f_{ps})(1-f_{pd}) \right]\notag \\
& \times \mathbf{E}[x^{Q_{p}^{t}}y^{Q_{s}^{t}}.\mathbf{1}[Q_{p}^{t}>0,Q_{s}^{t}>0]] \bigg\}
\end{align}
To explain the terms inside the braces of (\ref{expansion}), we analyze the $4$ possible combinations of the queue states, $Q_{p}^{t}$ and $Q_{s}^{t}$
\begin{itemize}
\item $Q_{p}^{t}=0, ~Q_{s}^{t}=0$

Since both queues are empty, $Y_{p}^{t}=Y_{s}^{t}=0$. This explains the first term in the braces in (\ref{expansion}).

\item $Q_{p}^{t}>0, ~Q_{s}^{t}=0$

Clearly, no departures occur at $Q_{s}$ since it is empty, i.e., $Y_{s}^{t}=0$. At the PU side, it transmits a packet whenever it has a non-empty queue. Thus, $Y_{p}^{t}$ is given by
\begin{equation}
    Y_{p}^{t}=
    \begin{cases}
      1, & \text{w.p.}\ f_{pd}+f_{ps}(1-f_{pd}) \\
      0, & \text{w.p.}\ (1-f_{ps})(1-f_{pd})
    \end{cases}
  \end{equation}
This states that a departure occurs at $Q_{p}$ if the packet is decoded by at least one node, either the destination or the SU. Otherwise, no departures occur and the packet remains at $Q_{p}$ for retransmission in the next slot. This gives the second term in the braces in (\ref{expansion}). 

\item $Q_{p}^{t}=0, ~Q_{s}^{t}>0$

The PU is idle, thus, $Y_{p}^{t}=0$. Then, the SU gains access to the system and transmits a packet. It randomly selects the source of this packet to be either $Q_{s}$ or $Q_{sp}$. Therefore, $Y_{s}^{t}$ is given by  
\begin{equation}
Y_{s}^{t}=
    \begin{cases}
      1, & \text{w.p.}\ af_{sd} \\
      0, & \text{w.p.}\ 1-af_{sd}
    \end{cases}
\end{equation}
This states that a departure occurs at $Q_{s}$ if it is selected to transmit, which happens w.p. $a$, and the transmitted packet is successfully decoded by the destination, which happens w.p. $f_{sd}$. Otherwise, no departures occur. This results in the third term in the braces in (\ref{expansion}).

\item $Q_{p}^{t}>0, ~Q_{s}^{t}>0$

Since the PU has the priority to transmit, the SU is silent and $Y_{s}^{t}=0$. The PU transmits a packet and $Q_{p}$ evolves exactly following the case of $Q_{p}^{t}>0, ~Q_{s}^{t}=0$ yielding the last term in the braces in (\ref{expansion}).
\end{itemize}

Taking the limit when $t \rightarrow \infty$ at both sides of (\ref{expansion}), we get
\begin{align} \label{G}
G(x,y)=(\lambda_{p}x+1-\lambda_{p})(\lambda_{s}y+1-\lambda_{s}) \notag \\
\times \frac{b(x,y)G(0,0)+c(x,y)G(0,y)}{yd(x,y)}
\end{align}
where
\begin{align} 
\begin{split}
& b(x,y)=xyaf_{sd}-xaf_{sd}
\end{split} \notag \\
\begin{split}
& c(x,y)\!\!=\!\! xaf_{sd}\!\! - \!\!y[f_{pd}\!\! + \!\! f_{ps}(1\!\! - \!\! f_{pd})]\!\! + \!\!xy[f_{sd} \!\! + \!\! f_{ps}(1 \!\! - \!\!f_{pd}) \!\! - \!\! af_{sd}]
\end{split}\notag \\
\begin{split}
& d(x,y)=x-(\lambda_{p}x+1-\lambda_{p})(\lambda_{s}y+1-\lambda_{s})\times  \\
& \phantom{d(x,y)=x-(} [f_{pd}+f_{ps}(1-f_{pd})+x(1-f_{ps})(1-f_{pd})]
\end{split}
\end{align}
From the definition of $G(x,y)$, note that
\begin{align}
\begin{split}
G(0,0)=\lim_{t \rightarrow \infty} \mathbf{E}[\mathbf{1}[Q_{p}^{t}=0,Q_{s}^{t}=0]]
\end{split}\notag \\
\begin{split}
G(x,0)=G(0,0)+\lim_{t \rightarrow \infty} \mathbf{E}[x^{Q_{p}^{t}}.\mathbf{1}[Q_{p}^{t}>0,Q_{s}^{t}=0]]
\end{split} \notag \\
\begin{split}
G(0,y)=G(0,0)+\lim_{t \rightarrow \infty} \mathbf{E}[y^{Q_{s}^{t}}.\mathbf{1}[Q_{p}^{t}=0,Q_{s}^{t}>0]]
\end{split} \notag \\
\begin{split}
& G(x,y)=G(x,0)+G(0,y)-G(0,0) \\
& \phantom{G(x,y)=} +\lim_{t \rightarrow \infty} 
\mathbf{E}[x^{Q_{p}^{t}}y^{Q_{s}^{t}}.\mathbf{1}[Q_{p}^{t}>0,Q_{s}^{t}>0]]
\end{split}
\end{align}
Along the lines of \cite{33}, $G(0,0)$ is evaluated using the normalization condition, $G(1,1)=1$, by taking the limit of (\ref{G}) when $(x,y)\rightarrow(1,1)$, which yields
\begin{equation}\label{G(0,0)}
G(0,0)\!\! = \!\! 
\frac{af_{sd}[f_{pd} \!\! + \!\! f_{ps}(1\!\! - \!\! f_{pd})\!\! - \!\! \lambda_{p}]
\!\! - \!\! \lambda_{s}[f_{pd}\!\! + \!\! f_{ps}(1\!\! - \!\! f_{pd})]}
{af_{sd}[f_{pd}+f_{ps}(1-f_{pd})]}
\end{equation}
In the derivation of (\ref{G(0,0)}), we use the fact that
\begin{equation}
G(0,1)=\lim_{t \rightarrow \infty} \mathbf{P}[Q_{p}^{t}=0]=1-\frac{\lambda_{p}}{f_{pd}+f_{ps}(1-f_{pd})}
\end{equation}
To find $N_{s}$, we solve for $G_{y}(1,1)$. We evaluate the derivative of (\ref{G}) w.r.t. $y$, then take the limit of the result when $(x,y) \rightarrow (1,1)$. Applying L'Hopital's rule twice, we obtain an equation relating $G_{y}(1,1)$ to $G_{y}(0,1)$ as
\begin{equation}\label{E1}
G_{y}(1,1)=\lambda_{s}-1+\frac{af_{sd}}{\lambda_{s}}G_{y}(0,1)
\end{equation} 
Next, we compute $\left.\frac{\partial G(y,y)}{\partial y} \right|_{y=1}$. We make use of the fact that $\left.\frac{\partial G(y,y)}{\partial y} \right|_{y=1}=N_{p}+N_{s}$, and $G_{y}(1,1)=N_{s}$. After some algebraic manipulation, we get
\begin{align}\label{E2}
G_{y}(1,1)&=
\frac{-(\lambda_{p}+\lambda_{s})^{2}+\lambda_{p}\lambda_{s}+\lambda_{p}+\lambda_{s}}
{f_{pd}+f_{ps}(1-f_{pd})-\lambda_{p}-\lambda_{s}} -N_{p} \notag \\
&+\left[ \frac{f_{pd}+f_{ps}(1-f_{pd})-af_{sd}}{f_{pd}+f_{ps}(1-f_{pd})-\lambda_{p}-\lambda_{s}} \right] G_{y}(0,1)
\end{align}
We can easily calculate $N_{p}$ by observing that $Q_{p}$ is a discrete-time $M/M/1$ queue with arrival rate $\lambda_{p}$ and service rate $\mu_{p}$. Thus, applying the Pollaczek-Khinchine formula \cite{PK}, $N_{p}$ is directly given by (\ref{pk}). Solving (\ref{E1}) and (\ref{E2}) together using the result obtained by (\ref{pk}), the term $G_{y}(0,1)$ is eliminated and $N_{s}$ is exactly given by (\ref{Ns}) in Theorem \ref{Thm2}. 

Next, we calculate the average delay of the PU's packets.
A PU's packet, if directly delivered to the destination, experiences the queueing delay at $Q_{p}$ only.
This happens w.p. $1-\epsilon=\frac{f_{pd}}{1-(1-f_{ps})(1-f_{pd})}$, which is the probability that the packet is successfully decoded by the destination given that it is dropped from $Q_{p}$. Otherwise, if the transmission through the direct link between the PU and destination fails, the packet is relayed through $Q_{sp}$ and, hence, experiences the total queueing delay at both $Q_{p}$ and $Q_{sp}$.
This happens w.p. $\epsilon$. Therefore, the average delay that a PU's packet experiences is given by
\begin{equation}\label{1}
D_{p}= (1-\epsilon)\tau_{p} + \epsilon (\tau_{p}+ \tau_{sp})= \tau_{p} + \epsilon \tau_{sp}
\end{equation}
where $\tau_{p}$ and $\tau_{sp}$ denote the average queueing delays at $Q_{p}$ and $Q_{sp}$, respectively. Since the arrival rates at $Q_{p}$ and $Q_{sp}$ are given by $\lambda_{p}$ and $\epsilon \lambda_{p}$, respectively. Then, applying Little's law yields
\begin{equation}\label{2}
\tau_{p}=N_{p}/ \lambda_{p}, \hspace{10mm} \tau_{sp}=N_{sp}/ \epsilon \lambda_{p}
\end{equation} 
Substituting (\ref{2}) in (\ref{1}) renders $D_{p}$ exactly matching (\ref{Dp}). Provided that $N_{p}$ is shown to be given by (\ref{pk}), the calculation of $D_{p}$ boils down to evaluating $N_{sp}$. As indicated earlier, the service process at $Q_{sp}$ depends on the state of $Q_{p}$, so we again employ the MGF approach to compute $N_{sp}$. Let 
$H(x,y)=\lim_{t \rightarrow \infty} \mathbf{E}[x^{Q_{p}^{t}}y^{Q_{sp}^{t}}]$ be defined as the MGF of the joint queue lengths of $Q_{p}$ and $Q_{sp}$. Using an analogous derivation employed to evaluate $G(x,y)$, we write $H(x,y)$ as
\begin{equation}
H(x,y)=(\lambda_{p}x+1-\lambda_{p})\frac{b^{'}(x,y)G(0,0)+c^{'}(x,y)G(0,y)}{yd^{'}(x,y)}
\end{equation}  
where
\begin{align}
\begin{split}
b^{'}(x,y)=x(1-a)f_{sd}(y-1) 
\end{split}\notag \\
\begin{split}
c^{'}(x,y)=x(1-a)f_{sd}-yf_{pd}-y^{2}f_{ps}(1-f_{pd}) \\
+xy[f_{pd}+f_{ps}(1-f_{pd})-(1-a)f_{sd}]
\end{split}\notag \\
\begin{split}
d^{'}(x,y)\!\! = \!\! x\!\! - \!\! (\lambda_{p}x\!\! + \!\! 1 \!\! - \!\! \lambda_{p})
[f_{pd}\!\! + \!\! yf_{ps}(1\!\! - \!\! f_{pd})\!\! + \!\! x(1\!\! - \!\! f_{ps})
(1\!\! - \!\! f_{pd})]
\end{split}
\end{align}
Following the same footsteps of the approach employed to evaluate $N_{s}$, $N_{sp}$ is shown to be given by (\ref{Nsp}). 
\end{proof}

In order to be able to solve the optimization problem formulated later, we check how $D_{p}$ and $D_{s}$ behave in response to variations in $a$.
\begin{proposition}\label{proposition 2}
Under the proposed randomized cooperative policy, if the system is stable, the average delay experienced by the packets of the PU, $D_{p}$, is a monotonically increasing function in $a$, while the average delay encountered by the packets of the SU, $D_{s}$, decreases monotonically with $a$.
\end{proposition}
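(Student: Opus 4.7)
The plan is to reduce each monotonicity claim to that of a single queue-length quantity and then verify the sign of its partial derivative in $a$, exploiting two clean factorizations that become visible once $\mu_p=f_{pd}+f_{ps}(1-f_{pd})$ is substituted throughout the expressions of Theorem~\ref{Thm2}. Since $N_p$ in~(\ref{pk}) is independent of $a$, we have $\partial D_p/\partial a=(1/\lambda_p)\,\partial N_{sp}/\partial a$, so the first statement is equivalent to showing $N_{sp}$ is increasing in $a$; likewise $D_s=N_s/\lambda_s$ reduces the second statement to $N_s$ being decreasing in $a$. Throughout, I would carry the stability bounds of Theorem~\ref{Thm1}, in particular $B:=\mu_p-\lambda_p>0$ and $f_{sd}(1-a)B>f_{ps}(1-f_{pd})\lambda_p$, to ensure that every denominator that arises is sign-definite.

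For $N_s$ I would differentiate~(\ref{Ns}) directly. The factor $B$ is $a$-independent, and $A$ and $C$ are linear in $af_{sd}$ with $\partial A/\partial a=f_{sd}(\mu_p-1)$ and $\partial C/\partial a=-f_{sd}B$. The quotient rule then makes the sign of $\partial N_s/\partial a$ equal to that of $\lambda_p\lambda_s\bigl[(\mu_p-1)C+BA\bigr]+B^2(\lambda_s^2-\lambda_s)(B+\lambda_p)$. The crucial observation I would record here is the cancellation $(\mu_p-1)C+BA=(\mu_p-1)\lambda_s\mu_p$, after which the expression reduces to a sum of two terms, each manifestly negative because $\mu_p<1$, $\lambda_s<1$, and $B>0$ under stability. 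This yields $\partial N_s/\partial a<0$ and hence $\partial D_s/\partial a<0$.

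For $N_{sp}$ I would substitute $u:=(1-a)f_{sd}$, which is decreasing in $a$, and $v:=f_{ps}(1-f_{pd})$. Using $f_{pd}=\mu_p-v$, the five coefficients $\alpha,\beta,\gamma,m,n$ in~(\ref{Nsp}) become linear in $u$. The denominator of $N_{sp}$, viewed as a quadratic in $\lambda_p$, has discriminant $\mu_p^2v^2$, a perfect square; this produces the factorization $\alpha\lambda_p^2+\beta\lambda_p+\gamma=B\bigl[uB-v\lambda_p\bigr]$, whose second factor is precisely the $Q_{sp}$ stability bound~(\ref{Q21_modified}) and therefore positive. The numerator collapses similarly to $v\lambda_p\bigl[(1-\mu_p)(u+v)\lambda_p+\mu_p B\bigr]$. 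I would then differentiate this ratio in $u$ by the quotient rule, using $B+\lambda_p=\mu_p$ to simplify, and show that the numerator of $\partial N_{sp}/\partial u$ factors as a negative constant times a product of positive quantities, yielding $\partial N_{sp}/\partial u<0$; since $u$ decreases in $a$, $\partial N_{sp}/\partial a>0$ and $\partial D_p/\partial a>0$.

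The hard part is less conceptual than symbolic: written in terms of the original $\alpha,\beta,\gamma,m,n,A,C$, both derivatives look like intractable polynomials in $\lambda_p$ and $\lambda_s$. The crux is spotting the two simplifications above—the cancellation $(\mu_p-1)C+BA=(\mu_p-1)\lambda_s\mu_p$ for $N_s$ and the perfect-square discriminant giving $\alpha\lambda_p^2+\beta\lambda_p+\gamma=B[uB-v\lambda_p]$ for $N_{sp}$. Once these are in place, positivity of $B$ and $uB-v\lambda_p$ from the stability region forces the signs without further casework. A useful sanity check is that increasing $a$ diverts service capacity from $Q_{sp}$ to $Q_s$ while leaving their arrival rates fixed, so on average $Q_{sp}$ grows and $Q_s$ shrinks, matching the directions of the monotonicity.
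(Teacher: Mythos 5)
Your proposal is correct and follows essentially the same route as the paper: both differentiate the closed-form expressions of Theorem~\ref{Thm2} with respect to $a$ and sign the result using the stability conditions, and your two key factorizations, $(\mu_p-1)C+BA=(\mu_p-1)\lambda_s\mu_p$ and $\alpha\lambda_p^2+\beta\lambda_p+\gamma=B\,[uB-v\lambda_p]$, both check out and organize the algebra more transparently than the paper's $\Omega,\omega$ bookkeeping. The only slip is immaterial: the collapsed numerator of $N_{sp}$ is $\tfrac{v\lambda_p}{\mu_p}\left[(1-\mu_p)(u+v)\lambda_p+\mu_p B\right]$, i.e.\ you dropped a positive, $a$-independent factor $1/\mu_p$ (and the derivative numerator $-\mu_p[(1-\mu_p)v\lambda_p+B^2]$ is a negative multiple of a \emph{sum} of positive terms rather than a product), neither of which affects the sign conclusions.
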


\begin{proof}
The closed-form expressions of $D_{p}$ and $D_{s}$ as functions of the parameter $a$ are given in Theorem \ref{Thm2} by (\ref{Dp}) and (\ref{Ds}), respectively. Thus, to check how $D_{p}$ and $D_{s}$ behave in response to changes in $a$, we compute their derivatives w.r.t. $a$. First, we take the derivative of (\ref{Dp}) w.r.t. $a$ which yields
\begin{align}
\frac{\partial D_{p}}{\partial a}=\Omega \bigg\{(\mu_{p}^{3}-\lambda_{p}^{3})-\lambda_{p}\omega(\mu_{p}-\lambda_{p}) \bigg\}
\end{align}
where 
\begin{align}
& \Omega=\frac{f_{sd}f_{ps}(1-f_{pd})}{(\alpha \lambda_{p}^{2} + \beta \lambda_{p} + \gamma)^{2}} \notag \\
& \omega= f_{pd}+[f_{ps}(1-f_{pd})+2]\mu_{p}
\end{align}
After some manipulations, we obtain
\begin{align}\label{Dp vs a}
\frac{\partial D_{p}}{\partial a} \!\! = \!\! \Omega 
(\mu_{p}\!\! - \!\! \lambda_{p})\left[(\mu_{p} \!\! - \!\! \lambda_{p})^{2}\!\! + \!\! \lambda_{p}\mu_{p}\!\! - \!\! \lambda_{p}[f_{pd}\!\! + \!\! \mu_{p}f_{ps}(1 \!\! - \!\! f_{pd})]\right] 
\end{align}
Since $\mu_{p}<1$, we note that $\lambda_{p}[f_{pd}+\mu_{p}f_{ps}(1-f_{pd})]<\lambda_{p}[f_{pd}+f_{ps}(1-f_{pd})]=\lambda_{p}\mu_{p}$. Using this fact along with (\ref{Dp vs a}), it can be seen that
\begin{align}\label{Dp conc}
\frac{\partial D_{p}}{\partial a} > \Omega (\mu_{p}-\lambda_{p})^{3}
\end{align}
Since it has been established in the proof of Theorem \ref{Thm1} by (\ref{Q1}) that $(\mu_{p}-\lambda_{p})$ is positive definite as long as $Q_{p}$ is stable. Furthermore, it can be noticed that $\Omega$ is positive by definition. Therefore, we conclude from (\ref{Dp conc}) that the derivative of $D_{p}$ w.r.t. $a$ is positive, irrespective of the choice of $a$. This proves that $D_{p}$ is a monotonically increasing function in $a$. 

\begin{figure}[t]
\begin{center}
\includegraphics[width=1\columnwidth , height=0.68\columnwidth]{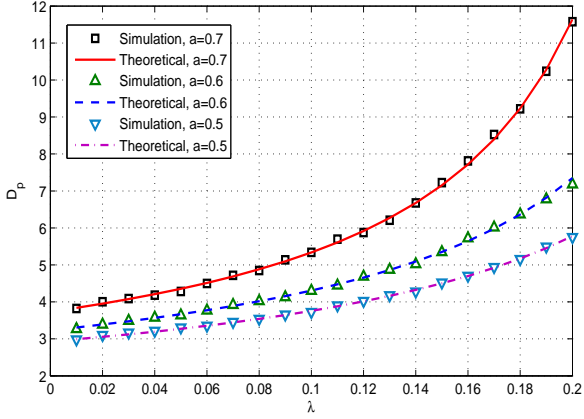}
\caption{Average delay of the PU's packets for different values of $a$.} \label{Fig4}
\end{center}
\vspace{-7mm}
\end{figure}
 
Next, we proceed with taking the derivative of (\ref{Ds}) w.r.t. $a$, which yields  
\begin{align}
\frac{\partial D_{s}}{\partial a}\!\! = \!\! f_{sd}\mu_{p}(\mu_{p}\!\! - \!\! \lambda_{p})
\bigg\{ \lambda_{s}(\lambda_{p}^{2}\!\! - \!\! \lambda_{p})\!\! + \!\!(\mu_{p}\!\! - \!\! \lambda_{p})[\lambda_{s}\mu_{p}\!\! - \!\!(\mu_{p}\!\! - \!\! \lambda_{p})]\bigg\} 
\end{align} 
Notice that $f_{sd}\mu_{p}(\mu_{p}-\lambda_{p})$ is positive definite. Thus, we focus our attention on analyzing the term in the braces. The term $(\lambda_{p}^{2}-\lambda_{p})$ is obviously negative definite since $\lambda_{p}^{2}<\lambda_{p}$. This follows from the fact that $\lambda_{p}$ is a positive number which is less than $1$. Finally, for checking the sign of $[\lambda_{s}\mu_{p}-(\mu_{p}-\lambda_{p})]$, we resort to the stability condition given by (\ref{Q22}) which states that $\lambda_{s}\mu_{p}<af_{sd}(\mu_{p}-\lambda_{p})$. Since each of $a$ and $f_{sd}$ is a  positive number less than $1$, then $af_{sd}$ is also a positive number less than $1$. Thus, it is guaranteed that $\lambda_{s}\mu_{p}<(\mu_{p}-\lambda_{p})$. This establishes that $[\lambda_{s}\mu_{p}-(\mu_{p}-\lambda_{p})]$ is negative definite. Therefore, it has been proven that $\frac{\partial D_{s}}{\partial a}$ is negative, irrespective of the choice of $a$. This, in turn, proves that $D_{s}$ is a monotonically decreasing function in $a$. 
\end{proof}

\section{Numerical Results}\label{numerical results}
In this section, we investigate the performance of the system under the proposed randomized policy. Extensive simulations are conducted to validate the closed-form expressions obtained for the average delay experienced by the packets of the PU and SU. Furthermore, we characterize and analyze a fundamental tradeoff between the average delay and the throughput at both the PU and SU. Moreover, performance comparisons to the no-cooperation scenario as well as \cite{Ephremedis} are done to show the merits of the proposed randomized scheme. 

\begin{figure}[t]
\begin{center}
\includegraphics[width=1\columnwidth , height=0.68\columnwidth]{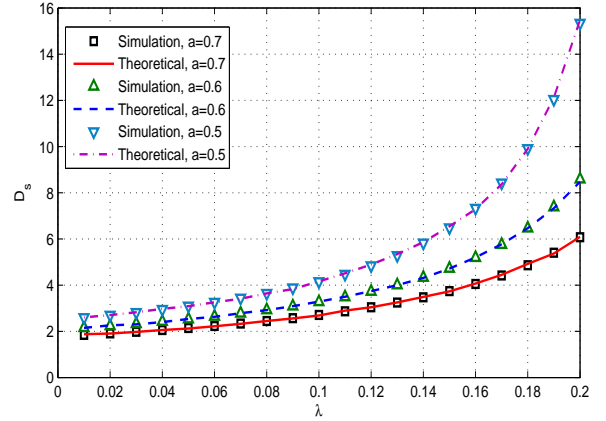}
\caption{Average delay of the SU's packets for different values of $a$.} \label{Fig5}
\end{center}
\vspace{-7mm}
\end{figure}

In Figs.~\ref{Fig4} and \ref{Fig5}, we plot the average delay experienced by the packets of the PU and the SU, respectively, versus $\lambda$, where we choose $\lambda_{p}=\lambda_{s}=\lambda$ for ease of exposition. It can be viewed that the results obtained through simulations exactly match the results of the closed-form expressions derived in Theorem \ref{Thm2}. This validates the soundness of the mathematical model and the MGF approach. Moreover, at a given $\lambda$, when $a$ increases, $D_{p}$ is shown to increase, while $D_{s}$ decreases. This matches the result stated by Proposition \ref{proposition 2}.  

Next, we characterize a fundamental tradeoff that arises between the average delay and the throughput at both the PU and SU. Intuitively, when a node needs to maintain a higher throughput, it loses in terms of the average delay encountered by its packets. Given that the system is stable, the node's throughput equals its packet arrival rate. Thus, increased throughput means injecting more packets into the system which yields a higher delay. 

In Fig. \ref{Fig6}, we illustrate the throughput-delay tradeoff at the PU. Note that, given the stability of the system, the throughput of the PU equals $\lambda_{p}$. We fix the value of $\lambda_{s}$ at $0.2$. Then, at every point $\lambda_{p}$, we formulate and solve the following optimization problem
\begin{align}
& \underset{a}{\text{minimize}}
& & D_{p} \notag \\
& \text{subject to}
& & \lambda_{p}<\left[ \frac{f_{sd}(1-a)}{f_{sd}(1-a)+f_{ps}(1-f_{pd})} \right]\mu_{p}, \notag \\
&&& \lambda_{s}<af_{sd} \left[ 1 - \frac{\lambda_{p}}{\mu_{p}} \right], \notag \\
&&& 0<a<1.
\end{align}
Thus, at every point $(\lambda_{p},\lambda_{s})$, we solve for the optimal value of $a$ that minimizes $D_{p}$, while simultaneously keeping the system stable at this point. The solution of the problem is easily done using the results obtained in Propositions \ref{proposition 1} and \ref{proposition 2}. First, we compute the feasible set, which is the set of $a$ values that guarantee the system stability at $(\lambda_{p},\lambda_{s})$. This set is given by the interval between the lower and upper bounds of $a$ defined by (\ref{lower bound}) and (\ref{upper bound}), respectively. To minimize $D_{p}$, we choose the minimum feasible value of $a$, i.e., the lower bound given by (\ref{lower bound}), since it has been established in Proposition \ref{proposition 2} that $D_{p}$ is a monotonically increasing function of $a$. Finally, we calculate $D_{p}$ using the closed-form expression obtained in Theorem \ref{Thm2} at this optimal value of $a$. Thus, in Fig. \ref{Fig6}, we show the best achievable performance of the system under the proposed randomized policy in terms of the average delay at the PU side. The tradeoff is now obvious, as the delay is shown to increase when the throughput increases. We also plot the throughput-delay curves corresponding to the no-cooperation scenario and the policy proposed in \cite{Ephremedis}. The proposed policy is shown to outperform the no-cooperation scenario, however, being outperformed by \cite{Ephremedis}. This is expected since in \cite{Ephremedis}, the priority is always given to the packets of the PU, i.e., $Q_{p}$ has the highest priority to transmit followed by $Q_{sp}$, while $Q_{s}$ never transmits except when both $Q_{p}$ and $Q_{sp}$ are empty.      

\begin{figure}[t]
\begin{center}
\includegraphics[width=1\columnwidth , height=0.68\columnwidth]{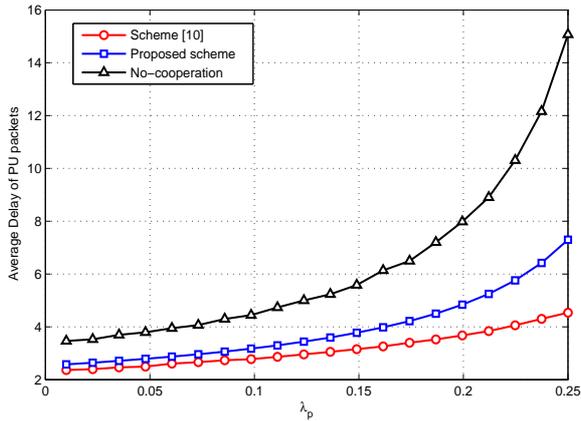}
\caption{Throughput-delay tradeoff at the PU.} \label{Fig6}
\end{center}
\vspace{-7mm}
\end{figure}

We follow the same steps at the SU side. Fixing $\lambda_{p}=0.2$, we vary the throughput of the SU, $\lambda_{s}$, to investigate its effect on $D_{s}$. For every point $(\lambda_{p},\lambda_{s})$, we minimize $D_{s}$ subject to keeping the system stable at this point. The resulting throughput-delay curves for the proposed policy as well as for \cite{Ephremedis} are shown in Fig. \ref{Fig7}. We avoided plotting the no-cooperation baseline case to have a clear view for the comparison between the plotted policies, since the no-cooperation performance is way worse than both. It can be viewed that at the SU, the best achievable performance of the system under the proposed randomized policy in terms of the average delay at the SU side, is superior to the performance of the system under the policy proposed in \cite{Ephremedis}.

\section{Conclusion}\label{conclusion}
We propose a randomized cooperative policy, whereby the SU serves either the queue of its own data or the relay queue w.p. $a$ and $(1-a)$, respectively. Results unveil performance gains of cooperation as opposed to the no-cooperation case. Cooperation is shown to expand the stable throughput region of the system as well as reducing the delay encountered by the packets of both users. Moreover, the proposed randomized policy opens room for trading the PU delay for enhanced SU delay. Furthermore, the system can be adjusted to satisfy various objectives under a set of constraints. This degree of freedom is inherited from the tunability of $a$.  

\begin{figure}[t]
\begin{center}
\includegraphics[width=1\columnwidth , height=0.68\columnwidth]{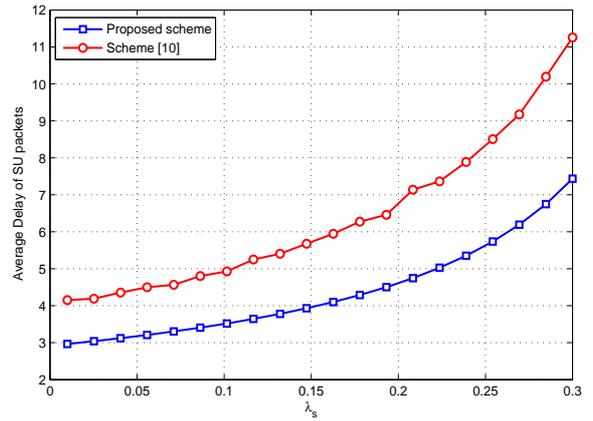}
\caption{Throughput-delay tradeoff at the SU.} \label{Fig7}
\end{center}
\vspace{-7mm}
\end{figure}
\bibliographystyle
{IEEEtran}
\bibliography{IEEEabrv,Cognitive}
\end{document}